\def\noheaderplainsetup{

\topmargin=0pt \headheight=0pt \headsep=0pt  \oddsidemargin=0pt \evensidemargin=0pt  \textheight=9.0truein \textwidth=6.2truein}
\begin{document}


\newcommand{\code}[1]{\ulcorner #1 \urcorner}
\newcommand{\mldi}{\hspace{2pt}\mbox{\footnotesize $\vee$}\hspace{2pt}}
\newcommand{\mlci}{\hspace{2pt}\mbox{\footnotesize $\wedge$}\hspace{2pt}}
\newcommand{\emptyrun}{\langle\rangle} 
\newcommand{\oo}{\bot}            
\newcommand{\pp}{\top}            
\newcommand{\xx}{\wp}               
\newcommand{\legal}[2]{\mbox{\bf Lr}^{#1}_{#2}} 
\newcommand{\win}[2]{\mbox{\bf Wn}^{#1}_{#2}} 
 \newcommand{\one}{\mbox{\sc One}}
 \newcommand{\two}{\mbox{\sc Two}}
 \newcommand{\three}{\mbox{\sc Three}}
 \newcommand{\four}{\mbox{\sc Four}}
 \newcommand{\first}{\mbox{\sc Derivation}}
 \newcommand{\second}{\mbox{\sc Second}}
 \newcommand{\uorigin}{\mbox{\sc Org}}
 \newcommand{\image}{\mbox{\sc Img}}
 \newcommand{\limitset}{\mbox{\sc Lim}}
 \newcommand{\fif}{\mbox{\bf CL15}}
\newcommand{\col}[1]{\mbox{$#1$:}}

\newcommand{\sti}{\mbox{\raisebox{-0.02cm}
{\scriptsize $\circ$}\hspace{-0.121cm}\raisebox{0.08cm}{\tiny $.$}\hspace{-0.079cm}\raisebox{0.10cm}
{\tiny $.$}\hspace{-0.079cm}\raisebox{0.12cm}{\tiny $.$}\hspace{-0.085cm}\raisebox{0.14cm}
{\tiny $.$}\hspace{-0.079cm}\raisebox{0.16cm}{\tiny $.$}\hspace{1pt}}}
\newcommand{\costi}{\mbox{\raisebox{0.08cm}
{\scriptsize $\circ$}\hspace{-0.121cm}\raisebox{-0.01cm}{\tiny $.$}\hspace{-0.079cm}\raisebox{0.01cm}
{\tiny $.$}\hspace{-0.079cm}\raisebox{0.03cm}{\tiny $.$}\hspace{-0.085cm}\raisebox{0.05cm}
{\tiny $.$}\hspace{-0.079cm}\raisebox{0.07cm}{\tiny $.$}\hspace{1pt}}}

\newcommand{\seq}[1]{\langle #1 \rangle}           

\newcommand{\pstb}{\mbox{\raisebox{-0.01cm}{\large $\wedge$}\hspace{-5pt}\raisebox{0.26cm}{\small $\mid$}\hspace{4pt}}}
\newcommand{\pcostb}{\mbox{\raisebox{0.22cm}{\large $\vee$}\hspace{-5pt}\raisebox{0.02cm}{\footnotesize $\mid$}\hspace{4pt}}}

\newcommand{\sst}{\mbox{\raisebox{-0.07cm}{\scriptsize $-$}\hspace{-0.2cm}$\pst$}} 

\newcommand{\scost}{\mbox{\raisebox{0.20cm}{\scriptsize $-$}\hspace{-0.2cm}$\pcost$}} 


\newcommand{\mla}{\mbox{{\Large $\wedge$}}}
\newcommand{\mle}{\mbox{{\Large $\vee$}}}

\newcommand{\pst}{\mbox{\raisebox{-0.01cm}{\scriptsize $\wedge$}\hspace{-4pt}\raisebox{0.16cm}{\tiny $\mid$}\hspace{2pt}}}
\newcommand{\gneg}{\neg}                  
\newcommand{\mli}{\rightarrow}                     
\newcommand{\cla}{\mbox{\large $\forall$}}      
\newcommand{\cle}{\mbox{\large $\exists$}}        
\newcommand{\mld}{\vee}    
\newcommand{\mlc}{\wedge}  
\newcommand{\ade}{\mbox{\Large $\sqcup$}}      
\newcommand{\ada}{\mbox{\Large $\sqcap$}}      
\newcommand{\add}{\sqcup}                      
\newcommand{\adc}{\sqcap}                      

\newcommand{\tlg}{\bot}               
\newcommand{\twg}{\top}               
\newcommand{\st}{\mbox{\raisebox{-0.05cm}{$\circ$}\hspace{-0.13cm}\raisebox{0.16cm}{\tiny $\mid$}\hspace{2pt}}}
\newcommand{\cst}{{\mbox{\raisebox{-0.05cm}{$\circ$}\hspace{-0.13cm}\raisebox{0.16cm}{\tiny $\mid$}\hspace{1pt}}}^{\aleph_0}} 
\newcommand{\cost}{\mbox{\raisebox{0.12cm}{$\circ$}\hspace{-0.13cm}\raisebox{0.02cm}{\tiny $\mid$}\hspace{2pt}}}
\newcommand{\ccost}{{\mbox{\raisebox{0.12cm}{$\circ$}\hspace{-0.13cm}\raisebox{0.02cm}{\tiny $\mid$}\hspace{1pt}}}^{\aleph_0}} 
\newcommand{\pcost}{\mbox{\raisebox{0.12cm}{\scriptsize $\vee$}\hspace{-4pt}\raisebox{0.02cm}{\tiny $\mid$}\hspace{2pt}}}

\newcommand{\psti}{\mbox{\raisebox{-0.02cm}{\tiny $\wedge$}\hspace{-0.121cm}\raisebox{0.08cm}{\tiny $.$}\hspace{-0.079cm}\raisebox{0.10cm}
{\tiny $.$}\hspace{-0.079cm}\raisebox{0.12cm}{\tiny $.$}\hspace{-0.085cm}\raisebox{0.14cm}
{\tiny $.$}\hspace{-0.079cm}\raisebox{0.16cm}{\tiny $.$}\hspace{1pt}}}

\newcommand{\pcosti}{\mbox{\raisebox{0.08cm}{\tiny $\vee$}\hspace{-0.121cm}\raisebox{-0.01cm}{\tiny $.$}\hspace{-0.079cm}\raisebox{0.01cm}
{\tiny $.$}\hspace{-0.079cm}\raisebox{0.03cm}{\tiny $.$}\hspace{-0.085cm}\raisebox{0.05cm}
{\tiny $.$}\hspace{-0.079cm}\raisebox{0.07cm}{\tiny $.$}\hspace{1pt}}}


\newtheorem{theoremm}{Theorem}[section]
\newtheorem{conditionss}{Condition}[section]
\newtheorem{thesiss}[theoremm]{Thesis}
\newtheorem{definitionn}[theoremm]{Definition}
\newtheorem{lemmaa}[theoremm]{Lemma}
\newtheorem{notationn}[theoremm]{Notation}\newtheorem{corollary}[theoremm]{Corollary}
\newtheorem{propositionn}[theoremm]{Proposition}
\newtheorem{conventionn}[theoremm]{Convention}
\newtheorem{examplee}[theoremm]{Example}
\newtheorem{remarkk}[theoremm]{Remark}
\newtheorem{factt}[theoremm]{Fact}
\newtheorem{exercisee}[theoremm]{Exercise}
\newtheorem{questionn}[theoremm]{Open Problem}
\newtheorem{conjecturee}[theoremm]{Conjecture}

\newenvironment{exercise}{\begin{exercisee} \em}{ \end{exercisee}}
\newenvironment{definition}{\begin{definitionn} \em}{ \end{definitionn}}
\newenvironment{theorem}{\begin{theoremm}}{\end{theoremm}}
\newenvironment{lemma}{\begin{lemmaa}}{\end{lemmaa}}
\newenvironment{proposition}{\begin{propositionn} }{\end{propositionn}}
\newenvironment{convention}{\begin{conventionn} \em}{\end{conventionn}}
\newenvironment{remark}{\begin{remarkk} \em}{\end{remarkk}}
\newenvironment{proof}{ {\bf Proof.} }{\  \rule{2.5mm}{2.5mm} \vspace{.2in} }
\newenvironment{idea}{ {\bf Idea.} }{\  \rule{1.5mm}{1.5mm} \vspace{.15in} }
\newenvironment{example}{\begin{examplee} \em}{\end{examplee}}
\newenvironment{fact}{\begin{factt}}{\end{factt}}
\newenvironment{notation}{\begin{notationn} \em}{\end{notationn}}
\newenvironment{conditions}{\begin{conditionss} \em}{\end{conditionss}}
\newenvironment{question}{\begin{questionn}}{\end{questionn}}
\newenvironment{conjecture}{\begin{conjecturee}}{\end{conjecturee}}

\title{A propositional system induced by Japaridze's approach to IF logic\thanks{Supported by National Natural Science Foundation of China (61303030) and the Fundamental Research Funds for the Central Universities of China (K5051370023).}}
\author{Wenyan Xu\\
{\it\small School of Mathematics and Statistics, Xidian University, Xi'an 710071, China}}
\date{}
\maketitle

\begin{abstract} Cirquent calculus is a new proof-theoretic and semantic approach introduced  for the needs of computability logic by G.Japaridze, who  also showed that, through cirquent calculus, one can capture, refine and generalize independence-friendly (IF) logic. Specifically, the approach allows us to account for independence from propositional connectives in the same spirit as the traditional IF logic accounts for independence from quantifiers. Japaridze's treatment of IF logic, however, was purely semantical, and no deductive system was proposed. The present paper constructs a  formal system sound and complete w.r.t. the propositional fragment of Japaridze's cirquent-based semantics for IF logic. Such a system can thus be considered an axiomatization of purely propositional IF logic in its full generality.

\end{abstract}

\noindent {\em MSC}: primary: 03B47; secondary: 03B70; 68Q10; 68T27; 68T15.

\

\noindent {\em Keywords}: Computability logic; Cirquent calculus; IF logic.


\section{Introduction}\label{ssintr}

{\em Cirquent calculus} is a new proof-theoretic and semantic approach introduced by G.Japaridze in \cite{Cir} and further developed in \cite{deep,fromto,taming1,taming2,wenyan1,wenyan2,wenyan3}. It was proposed for the needs of his computability logic (CoL) \cite{Jap03,beginning}. Unlike the more traditional proof theories that manipulate tree-like objects such as formulas, cirquent calculus deals with circuit-style objects termed {\em cirquents}. The main characteristic feature of the latter is allowing (one or another sort of) {\em sharing} of subcomponents between different components. Due to sharing, cirquent calculus has higher expressiveness and efficiency. For instance, as shown in \cite{deep}, the analytic cirquent calculus system CL8 achieves an exponential speedup of proofs over the classical analytic systems.

A qualitative generalization of the concept of cirquents was made in \cite{fromto}, where the idea of {\em clustering} of  propositional connectives was introduced. Intuitively, clusters  are   switch-style devices that combine tuples of individual disjunctive or conjunctive gates   in a parallel way ---  in a way where the choice  ({\em left} or {\em right}) of an argument is shared between all members. It was showed semantically in \cite{fromto} that, through cirquents with clustered  connectives  (and also quantifiers as generalized connectives), one can capture, refine and generalize the well known {\em independence-friendly (IF) logic}.\footnote{Clustering only disjunctions is sufficient for the basic IF logic, while the so called {\em extended IF logic} requires clustering both disjunctions and conjunctions.} The latter, introduced by J. Hintikka and G. Sandu \cite{IF} and further developed \cite{slash1,slash2,IFBOOK,tul,Van,A.P.,S.andP.} by a number of authors,  is a conservative extension of classical first-order logic,  allowing one to express independence relations between quantifiers. The past attempts (cf. \cite{A.P.,S.andP.}) to apply the same ideas to propositional connectives and thus develop IF logic at the propositional (as opposed to first-order) level, however, have remained limited only to certain special syntactic fragments of the language.

Japaridze's treatment of IF logic in \cite{fromto} was purely semantical, and no deductive system was proposed.
In this paper, we axiomatically construct a cirquent calculus system called {\bf $IF_p$}, with clustered disjunctive connectives, for propositional IF logic. Such a system is sound and complete w.r.t. the propositional fragment of Japaridze's cirquent-based semantics and can thus be considered an axiomatization of purely propositional, non-extended IF logic in its full generality.

\section{Preliminaries}
In this section we reproduce the basic relevant concepts from \cite{fromto}.
An interested reader may want to consult \cite{fromto} for additional
explanations, illustrations, examples and insights.

Our propositional language has infinitely many  {\bf atoms}, for which $p,q,r,s,\ldots$ will be used as metavariables. An atom $p$ and its negation $\neg p$ are called {\bf literals}.
A {\bf formula} means one of the language of classical propositional logic, built from
literals and the binary connectives $\wedge,\vee$ in the standard way.
Thus, all formulas are required to be in negation normal form. If we write
$A\rightarrow B$, it  is to be understood as an abbreviation of $\neg A\vee B$. And $\neg$, when applied to anything other than an atom, should be  understood as an abbreviation defined by $\neg\neg A=A$, $\neg(A\wedge B)=\neg A\vee\neg B$ and $\neg(A\vee B)=\neg A\wedge\neg B$.

A {\bf cirquent} is a formula together with a partition of the set of all occurrences of $\vee$ into subsets, called {\bf clusters}.\footnote{The concept of cirquents considered in cirquent calculus is more general than the one defined here. See \cite{fromto}.} With each cluster is associated a unique positive integer called its {\bf ID}. IDs serve as identifiers for clusters, and we will simply say  ``cluster $k$" to mean ``the cluster whose ID is $k$''.

One way to represent cirquents is to do so graphically,     using arcs to indicate the ``clusteral affiliations"  as  in the following figure:
\begin{center}
\begin{picture}(20,30)
\put(-90,0){$\bigl((p\vee\neg p)\wedge(p\vee\neg p)\bigr)\vee\bigl((q\vee r)\wedge(p\vee \neg q)\bigr)$}
\put(-47,17){\line(-5,-2){23}}\put(-47,17){\line(5,-2){23}}\put(-49,19){\footnotesize$1$}
\put(51,17){\line(-2,-1){19}}\put(51,17){\line(2,-1){19}}\put(49,19){\footnotesize$3$}
\put(5,19){\footnotesize$2$}\put(7,17){\line(0,-1){10}}
\end{picture}
\end{center}
For space efficiency considerations, in this paper we will instead be writing cirquents just like formulas, only  with every occurrence of $\vee$ indexed with the ID of the cluster to which the occurrence belongs. So, for instance, the above cirquent will be simply written as
$\bigl((p\vee_1\neg p)\wedge(p\vee_1\neg p)\bigr)\vee_2\bigl((q\vee_3 r)\wedge(p\vee_3 \neg q)\bigr)$.

 A cirquent $\mathcal{C}$ is said to be {\bf classical} iff all of its clusters are singletons. We shall identify such a cirquent with the formula of classical logic obtained from it by simply deleting all cluster IDs,
 i.e. replacing each $\vee_k$ (whatever $k$) with just $\vee$.

We will be using the term {\bf oconnective} to
refer to a connective together with a particular occurrence of it in a cirquent.

An {\bf interpretation} (or {\bf model}) is a function $^{\ast}$ that sends each atom $p$ to one of the values $p^{\ast}\in\{\top,\bot\}$, and extends to all literals by stipulating that $(\neg p)^{\ast}=\top$ iff $p^{\ast}=\bot$.

A {\bf metaselection} is a function $f:\{1,2,3,\ldots\}\rightarrow\{\mbox{\em left,right}\}$.
Given a cirquent $\mathcal{C}$ and a metaselection $f$, the {\bf resolvent} of a disjunctive subcirquent $\mathcal{A}\vee_k \mathcal{B}$ of  $\mathcal{C}$ is defined to be $\mathcal{A}$ if $f(k)=\mbox{\em left}$, and $\mathcal{B}$ if $f(k)=\mbox{\em right}$.

Let $\mathcal{C}$ be a cirquent, $^{\ast}$ an interpretation, and $f$ a metaselection. In this context, with ``metatrue" to be read as ``{\bf metatrue w.r.t. $(^{\ast},f)$}", we say that:
\begin{itemize}
\item A literal $L$ of $\mathcal{C}$ is metatrue iff $L^{\ast}=\top$.
\item A subcirquent $\mathcal{A}\vee_k\mathcal{B}$ of $\mathcal{C}$  is metatrue iff so is its resolvent.
\item A subcirquent  $\mathcal{A}\wedge\mathcal{B}$ of $\mathcal{C}$ is metatrue iff so are both of its conjuncts.
\end{itemize}
Next, we say that $\mathcal{C}$ is {\bf true} under the interpretation $^*$ (in the model $^*$), or simply that $\mathcal{C}^*$ is true, iff there is a metaselection $f$ such that $\mathcal{C}$ is metatrue w.r.t. $(^{\ast},f)$.
Finally, we say that $\mathcal{C}$ is {\bf valid} iff it is true under every interpretation (in every model).

Note that, when $\mathcal{C}$ is a formula, i.e. a cirquent where all clusters are singletons, $\mathcal{C}$ is valid iff it is valid (tautological) in the sense of classical logic. And classical truth of a formula under an interpretation $^*$ means nothing but existence of a selection $f$ such that the formula is true in our sense w.r.t. $(^{\ast},f)$. So, classical logic is nothing but the conservative fragment of our logic obtained by only allowing formulas in the language.

Each cirquent $C$ with $n$ atoms can be seen to be an $n$-ary truth function (even though expressed in a very unusual way) and, for this reason,  $C$ can be represented by a disjunctive-normal-form formula $F$ of classical logic obtained in the standard way from the truth table characterizing $C$.
However, the formula $F$ obtained this way from $C$ will generally be exponentially bigger than $C$. We believe there is no translation from cirquents to equivalent Boolean formulas that only creates polynomial size differences, even though this needs to be proven, of course. Overall, an answer to the question whether our system yields a greater expressive power than classical propositional logic depends on who is asked. A philosopher would probably say ``No'', a computer scientist say ``Yes'', and a mathematician  say that it depends on how ``expressive power'' is precisely defined.

The rest of this section is not technically relevant to the main results of the present paper and is only meant  for those who are familiar with IF logic but not with \cite{fromto}.\footnote{As for those unfamiliar with IF logic, they may want to consult \cite{IFBOOK} or \cite{tul}.} In an attempt to understand what all of the above has to do with IF logic, consider the formula
\begin{equation}\label{wenzi0}
\forall x(\exists y/\forall x)\exists z\hspace{2pt} p(x,y,z)
\end{equation}
with its standard meaning. According to the latter, given any object $x$, two objects $y$ and $z$ can be chosen so that $p(x,y,z)$ is true, with the modifier `$\forall x$' attached to `$\exists y$' indicating that here $y$ can be chosen independently from (without any knowledge of)  $x$. Assuming that the universe of discourse is $\{1,2\}$, (\ref{wenzi0}) can just as well be (re-)written as

\[\begin{array}{c}
\hspace{-3pt}\Bigl(\bigl(p(1,1,1)\vee^z p(1,1,2)\bigr)\vee^y\hspace{-3pt}/\hspace{-3pt}\wedge^x \bigl(p(1,2,1)\vee^z p(1,2,2)\bigr)\Bigr)\\ \wedge^x \\
\Bigl(\bigl(p(2,1,1)\vee^z p(2,1,2)\bigr)\vee^y\hspace{-3pt}/\hspace{-3pt}\wedge^x \bigl(p(2,2,1)\vee^z p(2,2,2)\bigr)\Bigr),
\end{array} \]
which, after further rewriting $p(1,1,1),p(1,1,2),\ldots$ as the more compact $p,q,\ldots$, is the propositional formula
\begin{equation} \label{wenzi} \begin{array}{c}
\Bigl(\bigl(p\vee^z q\bigr)\vee^y\hspace{-3pt}/\hspace{-3pt}\wedge^x \bigl(r\vee^z s\bigr)\Bigr) \wedge^x
\Bigl(\bigl(t\vee^z u\bigr)\vee^y\hspace{-3pt}/\hspace{-3pt}\wedge^x \bigl(v\vee^z w\bigr)\Bigr).
\end{array} \end{equation}
Here we have turned $\forall x$ into $\wedge^x$, $\exists y$ into $\vee^y$ and $\exists z$ into $\vee^z$, with the superscript in each case used just to remind us from which quantifier each oconnective was obtained, and $\vee^y\hspace{-1pt}/\hspace{-1pt}\wedge^x$ indicating that the $y$-superscripted disjunction is independent of the $x$-superscripted conjunction. Now, Japaridze's recipes (see \cite{fromto}, Descriptions 7.4 and 7.5) translate (\ref{wenzi}) into the following cirquent:
\begin{equation} \label{wenzi2} \begin{array}{c}
\Bigl(\bigl(p\vee_2 q\bigr)\vee_1 \bigl(r\vee_3 s\bigr)\Bigr)\wedge\Bigl(\bigl(t\vee_4 u\bigr)\vee_1 \bigl(v\vee_5 w\bigr)\Bigr).
\end{array} \end{equation}
Note that cluster $1$ contains two disjunctive oconnectives --- namely, those originating from $\exists y$, and all other clusters are singletons. It is left as an exercise for the reader to convince himself or herself that, in any given model (interpretation) $^*$, (\ref{wenzi2}) is true in our sense if and only if (\ref{wenzi}) is true in the sense of IF logic. Well, the present case is a ``lucky'' case because we easily understand what ``true in the sense of IF logic'' means for (\ref{wenzi}) --- after all, (\ref{wenzi}) originates from (and will be handled in the same way as) the first-order (\ref{wenzi0}). As an example of an ``unlucky'' case, consider the cirquent
\begin{equation} \label{wenzi3} \begin{array}{c}
(p\vee_1 q)\wedge\bigl((r\vee_1 s)\wedge q\bigr).
\end{array} \end{equation}
It is just as meaningful from the point of view of our semantics as any other cirquent, including (\ref{wenzi2}). An attempt to express the same in the traditional formalism of IF logic apparently yields something like
\begin{equation} \label{wenzi33} \begin{array}{c}
(p\vee\hspace{-3pt}/\hspace{-3pt}\wedge^x q)\wedge^x\bigl((r\vee\hspace{-3pt}/\hspace{-3pt}\wedge^x s)\wedge^y q\bigr).
\end{array} \end{equation}
Unlike (\ref{wenzi}), however, (\ref{wenzi33}) is problematic for the traditional semantical approaches (the ones based on imperfect information games) to IF logic. Namely, because of a problem called signaling, it is far from clear how its truth should be understood.

If the connections and differences between our present semantics and that of IF logic are still not clear, see the first 6 sections of \cite{fromto} for more explanations, discussions and examples.


\section{Main results}
\subsection{System $IF_p$ introduced}
Throughout the rest of this paper, for convenience of descriptions, we will be always omitting the IDs for singleton clusters when writing cirquents.
 Note that, this way, all classical cirquents mechanically turn into (rather than are just identified with) formulas of classical logic.

As will be seen shortly, the inference rules of our system $IF_p$
modify cirquents at any level rather than only around the root. Thus, $IF_p$ is in fact a {\em deep inference} system, in the style of \cite{deepinference}. This explains our borrowing some notation from the Calculus of Structures.
Namely, we
 will be using $\Phi\{\}$ or $\Psi\{\}$ to denote any cirquent where a vacancy (``hole'') $\{\}$ appears in the place of a subcirquent. The vacancy $\{\}$ can be filled with any cirquent. For example, if $\Phi\{\} = (p\vee_1 q)\vee_1(\{\}\wedge q)$, then $\Phi\{\neg p\} = (p\vee_1 q)\vee_1(\neg p\wedge q)$, $\Phi\{q\} = (p\vee_1 q)\vee_1(q\wedge q)$, and $\Phi\{p\vee_1 q\} = (p\vee_1 q)\vee_1((p\vee_1 q)\wedge q)$. Multiple vacancies are also allowed and they are treated similarly.\vspace{2mm}

The {\bf axioms} of  $IF_p$ are all classical cirquents that (seen as formulas)
 are tautologies of classical logic.
We schematically represent the {\bf rules of inference} of $IF_p$ in  Figure 1, where $\mathcal{A,B,C,D}$ stand for any cirquents and $\circ$ is a variable that stands for either $\wedge$ or $\vee$ or $\vee_l$ for some (whatever) particular index $l$ such that cluster $l$ is a non-singleton one. It is important to point out that, in each rule, {\em all} occurrences of $\circ$ stand for the same object ($\wedge$, $\vee$ or
$\vee_l$).
\vspace{2mm}

\begin{center}
\begin{picture}(80,170)(0,35)
\put(-45,0){\begin{picture}(80,200)\put(-79,190){$\Phi\bigl\{\Psi\{\mathcal{A}\}\vee_k\mathcal{C}\bigr\}$}
\put(-89,184){\line(1,0){85}}\put(0,183){\footnotesize\bf Rule I (left)}
\put(-90,172){$\Phi\bigl\{\Psi\{\mathcal{A}\vee_k\mathcal{B}\}\vee_k\mathcal{C}\bigr\}$}\end{picture}}

\put(150,0){\begin{picture}(80,200)\put(-79,190){$\Phi\bigl\{\mathcal{C}\vee_k\Psi\{\mathcal{A}\}\bigr\}$}
\put(-89,184){\line(1,0){85}}\put(0,183){\footnotesize\bf Rule I (right)}
\put(-90,172){$\Phi\bigl\{\mathcal{C}\vee_k\Psi\{\mathcal{B}\vee_k\mathcal{A}\}\bigr\}$}\end{picture}}

\put(-43,-50){\begin{picture}(80,200)\put(-95,190){$\Phi\bigl\{(\mathcal{A}\circ\mathcal{C})\vee_k(\mathcal{B}\circ\mathcal{C})\bigr\}$}
\put(-95,184){\line(1,0){95}}\put(3,183){\footnotesize\bf Rule II (left)}
\put(-83,172){$\Phi\bigl\{(\mathcal{A}\vee_k\mathcal{B})\circ\mathcal{C}\bigr\}$}\end{picture}}

\put(152,-50){\begin{picture}(80,200)\put(-93,190){$\Phi\bigl\{(\mathcal{C}\circ\mathcal{A})\vee_k(\mathcal{C}\circ\mathcal{B})\bigr\}$}
\put(-93,184){\line(1,0){95}}\put(5,183){\footnotesize\bf Rule II (right)}
\put(-82,172){$\Phi\bigl\{\mathcal{C}\circ(\mathcal{A}\vee_k\mathcal{B})\bigr\}$}\end{picture}}

\put(60,-100){\begin{picture}(80,200)\put(-95,190){$\Phi\bigl\{(\mathcal{A}\circ\mathcal{C})\vee_k(\mathcal{B}\circ\mathcal{D})\bigr\}$}
\put(-97,184){\line(1,0){104}}\put(10,183){\footnotesize\bf Rule III}
\put(-97,172){$\Phi\bigl\{(\mathcal{A}\vee_k\mathcal{B})\circ(\mathcal{C}\vee_k\mathcal{D})\bigr\}$}\end{picture}}


\put(-32,43){{\bf Figure 1:} The rules of $IF_p$}
\end{picture}
\end{center}

In each application of these rules, we call the oconnective(s) $\vee_k$ in the premise (resp. conclusion), as shown in Figure 1, the {\bf key oconnecitve(s)} of this application in the premise (resp. conclusion).

A {\bf proof} of a cirquent $\mathcal{A}$ in $IF_p$, as expected, is a sequence of cirquents such that the first cirquent in the sequence is an axiom of $IF_p$, the last cirquent is  $\mathcal{A}$, and every cirquent, except the axiom, follows from the preceding cirquent by one of the rules of $IF_p$. When such a proof exists, $\mathcal{A}$ is said to be {\bf provable} in $IF_p$.
Below is an example of a proof:
\begin{center}
\begin{picture}(80,110)(25,3)
\put(-60,100){\line(1,0){205}}\put(148,98){\bf\footnotesize Axiom}
\put(-53,90){$((q\wedge p)\vee(p\wedge\neg q))\vee((q\wedge\neg p)\vee(\neg p\wedge\neg q))$}
\put(-60,85){\line(1,0){205}}\put(148,83){\bf\footnotesize Rule III}
\put(-57,75){$((q\wedge p)\vee_2(q\wedge\neg p))\vee((p\wedge\neg q)\vee_2(\neg p\wedge\neg q))$}
\put(-60,70){\line(1,0){205}}\put(148,68){\bf\footnotesize Rule II (left)}
\put(-42,60){$((q\wedge p)\vee_2(q\wedge\neg p))\vee((p\vee_2\neg p)\wedge\neg q)$}
\put(-60,55){\line(1,0){205}}\put(148,53){\bf\footnotesize Rule II (right)}
\put(-30,45){$(q\wedge(p\vee_2\neg p))\vee((p\vee_2\neg p)\wedge\neg q)$}
\put(-60,40){\line(1,0){205}}\put(148,38){\bf\footnotesize Rule I (right)}
\put(-45,30){$(q\wedge(p\vee_2\neg p))\vee_1((p\vee_2\neg p)\wedge(s\vee_1\neg q))$}
\put(-60,25){\line(1,0){205}}\put(148,23){\bf\footnotesize Rule I (left)}
\put(-60,15){$((q\vee_1 r)\wedge(p\vee_2\neg p))\vee_1((p\vee_2\neg p)\wedge(s\vee_1\neg q))$}
\end{picture}
\end{center}

\begin{lemma}\label{yan1}
Given an  interpretation $^{\ast}$ and  a metaselection $f$, a cirquent $\Phi\{\mathcal{A}\vee_k\mathcal{B}\}$ is metatrue w.r.t. $(^{\ast},f)$ iff $f(k)=\mbox{\em left}$ (resp. $f(k)=\mbox{\em right}$)  and
the cirquent $\Phi\{\mathcal{A}\}$ (resp. $\Phi\{\mathcal{B}\}$) is metatrue w.r.t. $(^{\ast},f)$.
\end{lemma}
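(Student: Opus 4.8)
The plan is to argue by structural induction on the context $\Phi\{\}$ (equivalently, on the depth at which the hole occurs). Since the two halves of the statement are mirror images of each other, I will only treat the ``left'' version: assuming $f(k)=\mbox{\em left}$, I show that $\Phi\{\mathcal{A}\vee_k\mathcal{B}\}$ is metatrue w.r.t.\ $(^{\ast},f)$ iff $\Phi\{\mathcal{A}\}$ is. Throughout I will freely use the observation, immediate from the recursive clauses defining metatruth, that whether a cirquent is metatrue w.r.t.\ $(^{\ast},f)$ depends only on that cirquent, $^{\ast}$ and $f$, so the various subcirquents of $\Phi\{\mathcal{A}\vee_k\mathcal{B}\}$ may be discussed in isolation.

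For the base case $\Phi\{\}=\{\}$, the cirquent $\Phi\{\mathcal{A}\vee_k\mathcal{B}\}$ is literally $\mathcal{A}\vee_k\mathcal{B}$, which by definition is metatrue iff its resolvent is; since $f(k)=\mbox{\em left}$ the resolvent is $\mathcal{A}=\Phi\{\mathcal{A}\}$, and we are done.

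For the inductive step the hole lies properly inside $\Phi\{\}$, so $\Phi\{\}$ has one of the forms $\Psi\{\}\wedge E$, $E\wedge\Psi\{\}$, $\Psi\{\}\vee_m E$ or $E\vee_m\Psi\{\}$ for a strictly smaller context $\Psi\{\}$ and a cirquent $E$. If the outermost oconnective is $\wedge$, then $\Phi\{\mathcal{A}\vee_k\mathcal{B}\}$ is metatrue iff both $E$ and $\Psi\{\mathcal{A}\vee_k\mathcal{B}\}$ are, which by the induction hypothesis holds iff both $E$ and $\Psi\{\mathcal{A}\}$ are, i.e.\ iff $\Phi\{\mathcal{A}\}$ is. If the outermost oconnective is some $\vee_m$ (where $m$ may or may not coincide with $k$), I distinguish two subcases according to $f(m)$. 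If $f(m)$ points to the side carrying the hole, then the resolvents of $\Phi\{\mathcal{A}\vee_k\mathcal{B}\}$ and of $\Phi\{\mathcal{A}\}$ are $\Psi\{\mathcal{A}\vee_k\mathcal{B}\}$ and $\Psi\{\mathcal{A}\}$ respectively, and the equivalence follows at once from the induction hypothesis. If $f(m)$ points to the other side, then both resolvents equal $E$, so each of $\Phi\{\mathcal{A}\vee_k\mathcal{B}\}$ and $\Phi\{\mathcal{A}\}$ is metatrue exactly when $E$ is, and the equivalence holds trivially.

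The only place asking for a moment's care is this last subcase, possibly combined with $m=k$: when the hole sits on the branch of a $\vee_m$ that $f$ does not select, replacing $\mathcal{A}\vee_k\mathcal{B}$ by $\mathcal{A}$ alters only a portion of the cirquent that is ``invisible'' to $f$, hence metatruth is unaffected; and if $m=k$, the single value $f(k)=\mbox{\em left}$ consistently governs both the displayed $\vee_k$ and the one inside $\Psi\{\}$, so no conflict arises. Beyond this bookkeeping of the $\wedge$/$\vee$ and left/right cases I anticipate no real obstacle.
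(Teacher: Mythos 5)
Your proposal is correct and follows essentially the same route as the paper: induction on the structure of the context $\Phi\{\}$ with a case analysis on its outermost oconnective, the base case being the definition of the resolvent. The only cosmetic difference is that you merge the singleton-cluster and non-singleton-cluster disjunction cases (which the paper treats as separate cases (ii) and (iii)) into a single $\vee_m$ case split on whether $f(m)$ selects the side containing the hole, which is a harmless streamlining since metatruth of a disjunctive subcirquent does not depend on the cardinality of its cluster.
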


\begin{proof}
We prove the proposition by induction on the number of oconnectives of $\Phi\{\}$.

For the basis, assume that the number of oconnectives of $\Phi\{\}$ is $0$. Then $\Phi\{\mathcal{A}\vee_k\mathcal{B}\}=\mathcal{A}\vee_k\mathcal{B}$, $\Phi\{\mathcal{A}\}=\mathcal{A}$ and $\Phi\{\mathcal{B}\}=\mathcal{B}$. By the definition of metatruth, we immediately have $\mathcal{A}\vee_k\mathcal{B}$ is metatrue w.r.t. $(^{\ast},f)$ if and only if $f(k)=\mbox{\em left}$ (resp. $f(k)=\mbox{\em right}$) and its resolvent $\mathcal{A}$ (resp. $\mathcal{B}$) is so.

Now (induction hypothesis) assume that the proposition holds when the number of oconnectives of $\Phi\{\}$ is $n$.
We want to show that the proposition still holds when the number of oconnectives of $\Phi\{\}$ is $n+1$. The following three cases (i), (ii), (iii) need to be considered:

(i) Assume that the main connective of $\Phi\{\}$ is $\wedge$, namely, $\Phi\{\mathcal{A}\vee_k\mathcal{B}\}=\Psi\{\mathcal{A}\vee_k\mathcal{B}\}\wedge\mathcal{C}$  for some cirquent $\mathcal{C}$ (the other possibility
$\Phi\{\mathcal{A}\vee_k\mathcal{B}\}= \mathcal{C} \wedge\Psi\{\mathcal{A}\vee_k\mathcal{B}\}$ is similar).
By the definition of metatruth, $\Phi\{\mathcal{A}\vee_k\mathcal{B}\}$ is metatrue w.r.t. $(^{\ast},f)$ iff both $\Psi\{\mathcal{A}\vee_k\mathcal{B}\}$ and $\mathcal{C}$ are so. But, by the induction hypothesis, $\Psi\{\mathcal{A}\vee_k\mathcal{B}\}$ is metatrue w.r.t. $(^{\ast},f)$ iff $f(k)=\mbox{\em left}$ (resp. $f(k)=\mbox{\em right}$) and $\Psi\{\mathcal{A}\}$ (resp. $\Psi\{\mathcal{B}\}$) is metatrue w.r.t. $(^{\ast},f)$. Therefore, $\Phi\{\mathcal{A}\vee_k\mathcal{B}\}$ is metatrue w.r.t. $(^{\ast},f)$ iff $f(k)=\mbox{\em left}$ (resp. $f(k)=\mbox{\em right}$) and $\Psi\{\mathcal{A}\}\wedge\mathcal{C}=\Phi\{\mathcal{A}\}$ (resp. $\Psi\{\mathcal{B}\}\wedge\mathcal{C}=\Phi\{\mathcal{B}\}$) is metatrue w.r.t. $(^{\ast},f)$.

(ii) Assume that the main connective of $\Phi\{\}$ is $\vee$, namely, $\Phi\{\mathcal{A}\vee_k\mathcal{B}\}=\Psi\{\mathcal{A}\vee_k\mathcal{B}\}\vee\mathcal{C}$ (the case $\Phi\{\mathcal{A}\vee_k\mathcal{B}\}=\mathcal{C}\vee\Psi\{\mathcal{A}\vee_k\mathcal{B}\}$ is similar). This means that the displayed (main) occurrence of
 $\vee$ is in a singleton cluster $i$ for some positive integer $i\neq k$.  Two (sub)cases are to be further considered here.

{\it Case (a)}: $f(i)=\mbox{\em right}$.  Then $\Psi\{\mathcal{A}\vee_k\mathcal{B}\}\vee\mathcal{C}$ is metatrue w.r.t. $(^{\ast},f)$ iff its resolvent $\mathcal{C}$ is so. But exactly the same (and for the same reason) holds for both $\Psi\{\mathcal{A}\}\vee\mathcal{C}$ and $\Psi\{\mathcal{B}\}\vee\mathcal{C}$.  Consequently, vacuously adding ``$f(k)=\ldots$'', we arrive at the desired conclusion that $\Psi\{\mathcal{A}\vee_k\mathcal{B}\}\vee\mathcal{C}$ is metatrue w.r.t. $(^{\ast},f)$ iff $f(k)=\mbox{\em left}$ (resp. $f(k)=\mbox{\em right}$) and $\Psi\{\mathcal{A}\}\vee\mathcal{C}$ (resp. $\Psi\{\mathcal{B}\}\vee\mathcal{C}$) is
 metatrue w.r.t. $(^{\ast},f)$.

{\it Case (b)}: $f(i)=\mbox{\em left}$. Then $\Phi\{\mathcal{A}\vee_k\mathcal{B}\}$ is metatrue w.r.t. $(^{\ast},f)$ iff its resolvent $\Psi\{\mathcal{A}\vee_k\mathcal{B}\}$ is so, which, in turn (by the induction hypothesis), is the case  iff  $f(k)=\mbox{\em left}$ (resp. $f(k)=\mbox{\em right}$) and $\Psi\{\mathcal{A}\}$ (resp. $\Psi\{\mathcal{B}\}$) is metatrue w.r.t. $(^{\ast},f)$. This, in turn, is the case iff $f(k)=\mbox{\em left}$ (resp. $f(k)=\mbox{\em right}$) and $\Phi\{\mathcal{A}\}$ (resp. $\Phi\{\mathcal{B}\}$) is metatrue w.r.t. $(^{\ast},f)$. Hence the desired conclusion holds.

(iii) Assume that the main connective of $\Phi\{\}$ is $\vee_l$, for some (whatever) particular index $l$ such that cluster $l$ is not a singleton. Namely, assume  $\Phi\{\mathcal{A}\vee_k\mathcal{B}\}=\Psi\{\mathcal{A}\vee_k\mathcal{B}\}\vee_l\mathcal{C}$ (the case
$\Phi\{\mathcal{A}\vee_k\mathcal{B}\}=\mathcal{C}\vee_l\Psi\{\mathcal{A}\vee_k\mathcal{B}\}$ is similar).
If $l\neq k$, then we can employ an essentially the same argument as the one used in (ii).  And if
 $l=k$, then the case is even simpler, so we leave details to the reader.
\end{proof}

\begin{lemma}\label{niu}
All rules of $IF_p$ preserve truth in both top-down and bottom-up directions.
\end{lemma}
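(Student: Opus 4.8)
The plan is to treat each of the five rule schemas of Figure~1 separately, fixing an arbitrary interpretation $^{\ast}$ and, since truth of a cirquent means existence of a metaselection making it metatrue, showing for each rule that the premise is true iff the conclusion is true by relating their sets of witnessing metaselections. The key reduction is that all five rules operate \emph{inside} a context $\Phi\{\}$, so Lemma~\ref{yan1} lets me strip $\Phi$ away: a cirquent $\Phi\{\mathcal{D}\}$ is metatrue w.r.t.\ $(^{\ast},f)$ precisely when, peeling off the $\vee_k$'s along the path to the hole via repeated application of Lemma~\ref{yan1}, one reaches the subcirquent $\mathcal{D}$ itself with $f$ agreeing with the required left/right choices at those clusters. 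Hence it suffices to prove, for each rule, that for \emph{every} metaselection $f$ the premise's displayed subcirquent is metatrue w.r.t.\ $(^{\ast},f)$ iff the conclusion's displayed subcirquent is --- then Lemma~\ref{yan1} propagates this equivalence through $\Phi\{\}$ and the quantifier over $f$ in the definition of truth yields the claim. Note this even gives the stronger statement that the rules preserve metatruth w.r.t.\ every fixed $(^{\ast},f)$, not merely truth.

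Carrying this out, I would handle Rules~I, II, III in turn. For \textbf{Rule~I (left)}, the premise subcirquent is $\Psi\{\mathcal{A}\}\vee_k\mathcal{C}$ and the conclusion subcirquent is $\Psi\{\mathcal{A}\vee_k\mathcal{B}\}\vee_k\mathcal{C}$, with the \emph{same} cluster $k$ on both the outer $\vee_k$ and the inner one. If $f(k)=\mbox{\em right}$, both reduce (by the definition of metatruth for $\vee_k$) to ``$\mathcal{C}$ is metatrue,'' so they agree. If $f(k)=\mbox{\em left}$, the conclusion reduces to ``$\Psi\{\mathcal{A}\vee_k\mathcal{B}\}$ is metatrue,'' which by Lemma~\ref{yan1} (applied to the context $\Psi\{\}$, again with $f(k)=\mbox{\em left}$) is equivalent to ``$\Psi\{\mathcal{A}\}$ is metatrue,'' which is exactly what the premise reduces to. Rule~I (right) is symmetric. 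For \textbf{Rule~II (left)}, I case on the variable $\circ$. When $\circ=\wedge$: the premise subcirquent $(\mathcal{A}\vee_k\mathcal{B})\wedge\mathcal{C}$ is metatrue iff $\mathcal{A}\vee_k\mathcal{B}$ and $\mathcal{C}$ both are; and $(\mathcal{A}\wedge\mathcal{C})\vee_k(\mathcal{B}\wedge\mathcal{C})$ is metatrue iff (say $f(k)=\mbox{\em left}$) $\mathcal{A}\wedge\mathcal{C}$ is, i.e.\ $\mathcal{A}$ and $\mathcal{C}$ are --- matching, since $\mathcal{A}\vee_k\mathcal{B}$ metatrue with $f(k)=\mbox{\em left}$ means $\mathcal{A}$ metatrue; the $f(k)=\mbox{\em right}$ subcase is identical with $\mathcal{B}$. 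When $\circ=\vee$: both sides have the main (singleton) $\vee$ governed by the same selection value, and a short computation on the two subcases of that value gives the equivalence. When $\circ=\vee_l$ with $l$ non-singleton: here $l$ and $k$ are distinct clusters (distinct oconnectives, distinct IDs), so the choices $f(l)$ and $f(k)$ are independent, and casing on the four combinations of $f(l),f(k)$ reduces each side to one of $\mathcal{A},\mathcal{B},\mathcal{C}$ consistently. Rule~II (right) is symmetric. For \textbf{Rule~III}, the premise subcirquent is $(\mathcal{A}\vee_k\mathcal{B})\circ(\mathcal{C}\vee_k\mathcal{D})$ and the conclusion is $(\mathcal{A}\circ\mathcal{C})\vee_k(\mathcal{B}\circ\mathcal{D})$, with the crucial point that \emph{both} inner $\vee_k$'s carry the same cluster $k$, so a single value $f(k)$ governs both; casing on $f(k)=\mbox{\em left}$ vs.\ \emph{right} (and, for $\circ=\vee$, also on the value at $\circ$'s singleton cluster), each side collapses to a $\circ$ (or a resolvent) of the matching pair among $\mathcal{A},\mathcal{B},\mathcal{C},\mathcal{D}$.

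The main obstacle is bookkeeping rather than conceptual: I must be careful that when $\circ=\vee_l$ with cluster $l$ non-singleton, the \emph{same} metaselection $f$ still controls all occurrences of $\vee_l$ in the rewritten cirquent --- in Rule~II the single $\vee_l$ of the premise becomes two copies of $\vee_l$ in the conclusion (around $\mathcal{C}$), and both must behave identically, which they do since $f(l)$ is a single value; and I must confirm that the clusters $k$ and $l$ (or $k$ and the singleton cluster of a $\vee$) are genuinely different so that no spurious dependency is introduced, which holds because distinct oconnective occurrences in Figure~1 are given distinct IDs. Once these checks are in place, the equivalence ``premise metatrue w.r.t.\ $(^{\ast},f)$ iff conclusion metatrue w.r.t.\ $(^{\ast},f)$'' holds for every $f$, and applying Lemma~\ref{yan1} to lift through $\Phi\{\}$ plus existentially quantifying over $f$ yields that each rule preserves truth in both directions, completing the proof.
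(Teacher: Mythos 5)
Your overall strategy --- strip the context $\Phi\{\}$ via Lemma \ref{yan1} and compare the displayed subcirquents --- is the same as the paper's, and your treatment of Rule I and of the cases where $\circ$ is $\wedge$ or a non-singleton $\vee_l$ is correct. But there is a genuine gap in the case $\circ=\vee$ of Rules II and III, and it is exactly the case that carries most of the weight in the paper's proof. Your claim that ``the rules preserve metatruth w.r.t.\ every fixed $(^{\ast},f)$'' is false there. A singleton cluster by definition contains exactly one oconnective, so in Rule II (left) with $\circ=\vee$ the two occurrences of $\vee$ in the premise $(\mathcal{A}\vee_i\mathcal{C})\vee_k(\mathcal{B}\vee_j\mathcal{C})$ lie in \emph{distinct} singleton clusters $i,j$, and the single occurrence in the conclusion lies in yet another cluster $m$; moreover the duplication of $\mathcal{C}$ duplicates every singleton cluster inside it. So there is no single ``selection value governing the main $\vee$ on both sides'': $f(i)$, $f(j)$ and $f(m)$ are independent. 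Concretely, with empty context, $\mathcal{A}=p$, $\mathcal{B}=q$, $\mathcal{C}=r$, $p^{\ast}=\top$, $r^{\ast}=\bot$, and $f(k)=f(i)=\mbox{\em left}$, $f(m)=\mbox{\em right}$, the premise $(p\vee_i r)\vee_k(q\vee_j r)$ is metatrue w.r.t.\ $(^{\ast},f)$ while the conclusion $(p\vee_k q)\vee_m r$ is not. The same issue arises for Rule III with $\circ=\vee$ (three distinct singleton IDs $i,j,m$).

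What is preserved in that case is only \emph{truth}, not metatruth for a fixed $f$, and establishing it requires actively constructing a new metaselection from the given one: given a witness $f$ for the conclusion, define $g$ with $g(i)=g(j)=f(m)$ (and, for Rule II, copy $f$'s values on the clusters of $\mathcal{C}$ onto both duplicated families of clusters), and conversely, given a witness $u$ for the premise, define $v(m)$ to be $u(i)$ or $u(j)$ according to whether $u(k)$ is {\em left} or {\em right} (and similarly transfer the values from the selected copy of $\mathcal{C}$). This construction, in both directions, is the content of Case (b) for Rules II and III in the paper and is what your ``short computation on the two subcases'' elides. Your proof as written would not go through without it; once you add the metaselection surgery, the rest of your argument is sound.
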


\begin{proof} Pick an arbitrary interpretation $^*$.

{\em Rule I}: Here we will only look at Rule I (left),  with Rule I (right) being similar. Consider an arbitrary  metaselection $f$. By Lemma \ref{yan1},
the premise is metatrue w.r.t. $(^{\ast},f)$ iff $f(k)=\mbox{\em left}$ (resp. $f(k)=\mbox{\em right}$) and  $\Phi\bigl\{\Psi\{\mathcal{A}\}\bigr\}$ (resp. $\Phi\{\mathcal{C}\}$)  is metatrue w.r.t. $(^{\ast},f)$. But exactly the same is the case for the conclusion as well (only, now Lemma \ref{yan1} needs to be applied twice). Thus, the premise is metatrue w.r.t. $(^{\ast},f)$ iff so is the conclusion. This, of course, implies that the premise is true under $^*$ iff so is the conclusion, as desired.

{\em Rule II}:  Again, we will only consider Rule II (left),  with Rule II (right) being similar.
 We want to show that the premise $\Phi\bigl\{(\mathcal{A}\circ\mathcal{C})\vee_k(\mathcal{B}\circ\mathcal{C})\bigr\}$ is true under $^{\ast}$ iff so is the conclusion $\Phi\bigl\{(\mathcal{A}\vee_k\mathcal{B})\circ\mathcal{C}\bigr\}$.
The following two cases need to be considered here.

{\em Case (a)}: $\circ$ is $\wedge$ or $\vee_l$. In this case, by Lemma \ref{yan1} applied to both the premise
and the conclusion,
 we immediately get that, for an arbitrary metaselection $f$,  the premise is metatrue w.r.t. $(^{\ast},f)$ iff so is the conclusion. Hence, the premise  is true under $^*$ iff so is the conclusion.

{\em Case (b)}: $\circ$ is $\vee$. Assume that, after restoring the (otherwise always omitted) singleton-cluster IDs, the premise
is $\Phi\bigl\{(\mathcal{A}\vee_i\mathcal{C})\vee_k(\mathcal{B}\vee_j\mathcal{C})\bigr\}$
and the conclusion
is $\Phi\bigl\{(\mathcal{A}\vee_k\mathcal{B})\vee_m\mathcal{C}\bigr\}$, where clusters $i,j,m$ are singletons.
Let $\{l_1,\ldots,l_n\}$ be the collection of all singleton clusters in the subcirquent $\mathcal{C}$ of the conclusion.
And let $\{l'_1,\ldots,l'_n\}$ (resp. $\{l''_1,\ldots,l''_n\}$) be the collection of all singleton clusters in the left (resp. right) occurrence of $\mathcal{C}$ in the premise satisfying the condition that, for any $h\in\{1,\ldots,n\}$, $l'_h$ (resp. $l''_h$) occurs in this $\mathcal{C}$ at the same place as $l_h$ occurs in the $\mathcal{C}$ part of the conclusion.

Suppose that the conclusion $\Phi\bigl\{(\mathcal{A}\vee_k\mathcal{B})\vee_m\mathcal{C}\bigr\}$ is true under $^*$. Then there is a metaselection $f$ such that $\Phi\bigl\{(\mathcal{A}\vee_k\mathcal{B})\vee_m\mathcal{C}\bigr\}$ is metatrue w.r.t. $(^{\ast},f)$. This, by Lemma \ref{yan1}, implies that  $f(k)=\mbox{\em left}$ (resp. $f(k)=\mbox{\em right}$) and $\Phi\{\mathcal{A}\vee_m\mathcal{C}\}$ (resp. $\Phi\{\mathcal{B}\vee_m\mathcal{C}\}$) is  metatrue w.r.t. $(^{\ast},f)$.
Let $g$ be a metaselection satisfying the conditions that $g(i)=g(j)=f(m)$, $g(l'_h)=g(l''_h)=f(l_h)$ for any $h\in\{1,\ldots,n\}$ and $g$ agrees with $f$ on all other clusters. Then we have $g(k)=\mbox{\em left}$ (resp. $g(k)=\mbox{\em right}$) and $\Phi\{\mathcal{A}\vee_i\mathcal{C}\}$ (resp. $\Phi\{\mathcal{B}\vee_j\mathcal{C}\}$) is  metatrue w.r.t. $(^{\ast},g)$. This, again by Lemma \ref{yan1}, implies that the premise $\Phi\bigl\{(\mathcal{A}\vee_i\mathcal{C})\vee_k(\mathcal{B}\vee_j\mathcal{C})\bigr\}$ is metatrue w.r.t. $(^{\ast},g)$. Therefore, the premise is true under $^*$.

Now suppose that the premise $\Phi\bigl\{(\mathcal{A}\vee_i\mathcal{C})\vee_k(\mathcal{B}\vee_j\mathcal{C})\bigr\}$ is true under  $^*$, meaning that there is a metaselection $u$ such that $\Phi\bigl\{(\mathcal{A}\vee_i\mathcal{C})\vee_k(\mathcal{B}\vee_j\mathcal{C})\bigr\}$ is metatrue w.r.t. $(^{\ast},u)$. By Lemma \ref{yan1}, this implies that $u(k)=\mbox{\em left}$ (resp. $u(k)=\mbox{\em right}$) and $\Phi\{\mathcal{A}\vee_i\mathcal{C}\}$ (resp. $\Phi\{\mathcal{B}\vee_j\mathcal{C}\}$) is  metatrue w.r.t. $(^{\ast},u)$.
Let $v$ be a metaselection satisfying the following two conditions: if $u(k)=\mbox{\em left}$ (resp. $u(k)=\mbox{\em right}$), then $v(m)=u(i)$ (resp. $v(m)=u(j)$) and $v(l_h)=u(l'_h)$ (resp. $v(l_h)=u(l''_h)$) for any $h\in\{1,\ldots,n\}$; $v$ agrees with $u$ on all other clusters. Then, we have $v(k)=\mbox{\em left}$ (resp. $v(k)=\mbox{\em right}$) and $\Phi\{\mathcal{A}\vee_m\mathcal{C}\}$ (resp. $\Phi\{\mathcal{B}\vee_m\mathcal{C}\}$) is  metatrue w.r.t. $(^{\ast},v)$. This, by Lemma \ref{yan1}, implies that the conclusion $\Phi\bigl\{(\mathcal{A}\vee_k\mathcal{B})\vee_m\mathcal{C}\bigr\}$ is metatrue w.r.t. $(^{\ast},v)$. Hence, the conclusion is true under $^*$.

{\em Rule III}: We want to show that the premise $\Phi\bigl\{(\mathcal{A}\circ\mathcal{C})\vee_k(\mathcal{B}\circ\mathcal{D})\bigr\}$ is true under $^{\ast}$ iff so is the conclusion $\Phi\bigl\{(\mathcal{A}\vee_k\mathcal{B})\circ(\mathcal{C}\vee_k\mathcal{D})\bigr\}$. Two cases are to be considered.

{\em Case (a)}: $\circ$ is $\wedge$ or $\vee_l$. Consider an arbitrary metaselection $f$. It is not hard to see that, by Lemma \ref{yan1} applied twice, the conclusion $\Phi\bigl\{(\mathcal{A}\vee_k\mathcal{B})\circ(\mathcal{C}\vee_k\mathcal{D})\bigr\}$ is metatrue w.r.t. $(^{\ast},f)$ iff $f(k)=\mbox{\em left}$ (resp. $f(k)=\mbox{\em right}$) and $\Phi\{\mathcal{A}\circ\mathcal{C}\}$ (resp. $\Phi\{\mathcal{B}\circ\mathcal{D}\}$) is  metatrue w.r.t. $(^{\ast},f)$. This, in turn, is the case iff the premise $\Phi\bigl\{(\mathcal{A}\circ\mathcal{C})\vee_k(\mathcal{B}\circ\mathcal{D})\bigr\}$ is metatrue w.r.t. $(^{\ast},f)$. Therefore, the conclusion is true under $^{\ast}$ iff so is the premise.

{\em Case (b)}: $\circ$ is $\vee$. Assume that (after restoring the singleton-cluster IDs) the premise is $\Phi\bigl\{(\mathcal{A}\vee_i\mathcal{C})\vee_k(\mathcal{B}\vee_j\mathcal{D})\bigr\}$ and the conclusion is $\Phi\bigl\{(\mathcal{A}\vee_k\mathcal{B})\vee_m(\mathcal{C}\vee_k\mathcal{D})\bigr\}$, where clusters $i,j,m$ are singletons. As can be seen from the following two paragraphs, the present case is very similar to Case (b) of Rule II.

Suppose that the conclusion $\Phi\bigl\{(\mathcal{A}\vee_k\mathcal{B})\vee_m(\mathcal{C}\vee_k\mathcal{D})\bigr\}$ is true under $^*$. Then there is a metaselection $f$ such that $\Phi\bigl\{(\mathcal{A}\vee_k\mathcal{B})\vee_m(\mathcal{C}\vee_k\mathcal{D})\bigr\}$ is metatrue w.r.t. $(^{\ast},f)$. This, by Lemma \ref{yan1} applied twice, implies that  $f(k)=\mbox{\em left}$ (resp. $f(k)=\mbox{\em right}$) and $\Phi\{\mathcal{A}\vee_m\mathcal{C}\}$ (resp. $\Phi\{\mathcal{B}\vee_m\mathcal{D}\}$) is  metatrue w.r.t. $(^{\ast},f)$. Let $g$ be a metaselection satisfying the conditions that $g(i)=g(j)=f(m)$ and $g$ agrees with $f$ on all other clusters. Then we have $g(k)=\mbox{\em left}$ (resp. $g(k)=\mbox{\em right}$) and $\Phi\{\mathcal{A}\vee_i\mathcal{C}\}$ (resp. $\Phi\{\mathcal{B}\vee_j\mathcal{D}\}$) is  metatrue w.r.t. $(^{\ast},g)$. This, again by Lemma \ref{yan1}, implies that the premise $\Phi\bigl\{(\mathcal{A}\vee_i\mathcal{C})\vee_k(\mathcal{B}\vee_j\mathcal{D})\bigr\}$ is metatrue w.r.t. $(^{\ast},g)$. Therefore, the premise is true under $^*$.

Now suppose that the premise $\Phi\bigl\{(\mathcal{A}\vee_i\mathcal{C})\vee_k(\mathcal{B}\vee_j\mathcal{D})\bigr\}$ is true under  $^*$. Then there is a metaselection $u$ such that $\Phi\bigl\{(\mathcal{A}\vee_i\mathcal{C})\vee_k(\mathcal{B}\vee_j\mathcal{D})\bigr\}$ is metatrue w.r.t. $(^{\ast},u)$. By Lemma \ref{yan1}, this implies that $u(k)=\mbox{\em left}$ (resp. $u(k)=\mbox{\em right}$) and $\Phi\{\mathcal{A}\vee_i\mathcal{C}\}$ (resp. $\Phi\{\mathcal{B}\vee_j\mathcal{D}\}$) is  metatrue w.r.t. $(^{\ast},u)$. Let $v$ be a metaselection satisfying the following two conditions: when $u(k)=\mbox{\em left}$ (resp. $u(k)=\mbox{\em right}$), $v(m)=u(i)$ (resp. $v(m)=u(j)$); $v$ agrees with $u$ on all other clusters. Then, we have $v(k)=\mbox{\em left}$ (resp. $v(k)=\mbox{\em right}$) and $\Phi\{\mathcal{A}\vee_m\mathcal{C}\}$ (resp. $\Phi\{\mathcal{B}\vee_m\mathcal{D}\}$) is  metatrue w.r.t. $(^{\ast},v)$. This, by Lemma \ref{yan1} applied twice, implies that the conclusion $\Phi\bigl\{(\mathcal{A}\vee_k\mathcal{B})\vee_m(\mathcal{C}\vee_k\mathcal{D})\bigr\}$ is metatrue w.r.t. $(^{\ast},v)$. Hence, the conclusion is true under $^*$.
\end{proof}

\subsection{The soundness and completeness of $IF_p$}
\begin{theorem}
 A cirquent is valid if and only if it is provable in $IF_p$.
\end{theorem}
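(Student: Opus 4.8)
The plan is to prove the two directions separately, with soundness being the routine half and completeness the substantive one.

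\medskip

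\textbf{Soundness (provable $\Rightarrow$ valid).} First I would argue by induction on the length of a proof. The base case is immediate: an axiom is a classical cirquent that is a classical tautology, and by the remark following the definition of validity, classical tautologicity coincides with validity in the cirquent sense. For the inductive step, suppose the last cirquent $\mathcal{A}$ of a proof follows from the preceding cirquent $\mathcal{A}'$ by one of the rules of $IF_p$. By Lemma~\ref{niu}, every rule preserves truth in both directions under every interpretation; hence $\mathcal{A}$ is true under $^*$ iff $\mathcal{A}'$ is, for every $^*$. Since $\mathcal{A}'$ is provable by a shorter proof, the induction hypothesis gives that $\mathcal{A}'$ is valid, so $\mathcal{A}$ is valid as well.

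\medskip

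\textbf{Completeness (valid $\Rightarrow$ provable).} This is the heart of the matter. I would take a valid cirquent $\mathcal{C}$ and try to \emph{reduce it upward} to an axiom by running the rules of $IF_p$ in the bottom-up direction, which by Lemma~\ref{niu} again preserves validity at every step. The guiding idea: any non-classical cirquent $\mathcal{C}$ has some non-singleton cluster $k$ containing at least two distinct oconnectives $\vee_k$. Using Rules I, II, III read bottom-up, one can ``merge'' or ``pull together'' the oconnectives of cluster $k$: Rule~III lets two sibling $\vee_k$'s (appearing as $(\mathcal{A}\vee_k\mathcal{B})\circ(\mathcal{C}\vee_k\mathcal{D})$) be replaced upward by a single $\vee_k$ distributed over $\circ$; Rule~II does the same when $\mathcal{C}=\mathcal{D}$ or $\mathcal{A}=\mathcal{B}$; and Rule~I lets a $\vee_k$ nested strictly inside another $\vee_k$'s scope be absorbed. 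The worked example in the paper is exactly such a reduction run backwards. The strategy is to show that by repeated bottom-up rule applications one can always decrease a suitable complexity measure — e.g.\ (number of non-singleton clusters, then total number of oconnectives inside non-singleton clusters, lexicographically) — until every cluster is a singleton, i.e.\ the cirquent has become a classical formula $\mathcal{C}^\flat$. Since validity is preserved upward and $\mathcal{C}$ was valid, $\mathcal{C}^\flat$ is a valid classical cirquent, hence a classical tautology, hence an axiom; reading the reduction sequence top-down yields an $IF_p$-proof of $\mathcal{C}$.

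\medskip

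\textbf{The main obstacle.} The delicate point is proving that the bottom-up reduction can \emph{always} be driven to completion — that one never gets stuck with a genuine non-singleton cluster that no rule applies to. Concretely, given a cluster $k$ with two occurrences of $\vee_k$, one must verify that their positions in the formula tree always exhibit one of the three patterns matched by Rules~I–III (one nested inside the other's argument, or two under a common binary parent, possibly after first using other rule instances to reshape the surrounding context so that the two $\vee_k$-occurrences become ``adjacent''). This likely requires an auxiliary combinatorial lemma: for any non-singleton cluster, pick two of its oconnectives whose occurrences are, say, at minimal distance in the tree, analyze the connective sitting at their least common ancestor, and show a rule instance (read upward) strictly reduces the chosen measure. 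One must also take care that restoring/merging singleton-cluster IDs (the bookkeeping seen in Case~(b) of Rules~II and~III) does not interfere, and that the measure genuinely decreases rather than merely shifts complexity between clusters. Once that combinatorial lemma is in place, the rest is the straightforward induction sketched above together with an appeal to classical completeness for the terminal axiom.
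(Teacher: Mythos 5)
Your soundness half is correct and essentially identical to the paper's (the paper compresses the induction into one sentence, but the content is the same: axioms are valid because classical validity and cirquent validity coincide on classical cirquents, and Lemma~\ref{niu} carries validity up the proof). Your completeness half also follows the paper's overall strategy --- reduce the valid cirquent bottom-up to a classical tautology while preserving validity via Lemma~\ref{niu} --- but it stops exactly where the real work begins. You explicitly name ``the main obstacle'' (showing the reduction can always be driven to completion) and defer it to an unproven ``auxiliary combinatorial lemma.'' That lemma \emph{is} the proof of completeness; without it the argument is a plan, not a proof.

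Moreover, the two concrete suggestions you do make for that lemma would not work as stated. First, your termination measure (number of non-singleton clusters, then total number of oconnectives inside non-singleton clusters) is not monotone: Rule~II read bottom-up \emph{duplicates} the context subcirquent $\mathcal{C}$, which may contain oconnectives belonging to other non-singleton clusters, so the second component can increase while the first stays fixed. The paper instead uses a four-tuple ordered lexicographically (size of the cluster currently being processed, then a count tied to the number of designated key oconnectives, then the levels of the chosen pair, then the rest), i.e.\ an ordinal below $\omega^4$, precisely so that the duplication is absorbed by a lower-priority component. Second, choosing the two same-cluster oconnectives at \emph{minimal tree distance} is the wrong selection criterion: it does not prevent the duplicated context from containing a third oconnective of the same cluster (a deeply nested occurrence can be far from $a$ in distance yet sit inside the subcirquent that Rule~II copies), which would grow the cluster and also create new coincidences to untangle. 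The paper selects the pair whose \emph{nearest common ancestor is deepest}; that maximality is exactly what guarantees the copied context is free of cluster-$k_i$ oconnectives. You also omit the necessary preprocessing phase (exhaustive bottom-up application of Rule~I to remove all ancestor--descendant pairs within a cluster) and the verification that the subsequent Rule~II/III steps never re-create such pairs --- an invariant the main loop silently depends on.
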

\begin{proof}
The soundness part is immediate, because the axioms are obviously valid and, by Lemma \ref{niu}, all rules preserve truth and hence validity. For the completeness part, consider an arbitrary cirquent $\mathcal{C}$ and assume it is valid. We want to show that $\mathcal{C}$ is provable in $IF_p$.\vspace{2mm}

In the context of a given cirquent $\mathcal{D}$, we define the {\bf level} of an oconnective $a$, denoted by $\mathcal{L}^{\mathcal{D}}(a)$, to be the total number of oconnectives $b$ such that $a$ is in the scope of $b$. An oconnective $b$ is a {\bf child} of an oconnective $a$ and $a$ is the {\bf parent} of $b$ when $b$ is in the scope of $a$ and $\mathcal{L}^{\mathcal{D}}(b) = \mathcal{L}^{\mathcal{D}}(a)+1$. The relations ``descendant" and ``ancestor" are the transitive closures of the relations ``child" and ``parent", respectively. The {\bf distance} between an oconnective $a$ and one of its descendants $b$ is defined to be the positive integer $k$ such that $k = \mathcal{L}^{\mathcal{D}}(b)-\mathcal{L}^{\mathcal{D}}(a)$; when the distance between $a$ and $b$ is less than the distance between $a$ and another descendant $c$ of $a$, we say that $b$ is {\bf nearer} to $a$ (or vice versa) than $c$ is. Next, for any two oconnectives $a$ and $b$, we denote their nearest common ancestor oconnective by  $\underline{ab}$.\vspace{2mm}

If our cirquent $\mathcal{C}$ is classical (i.e. every cluster of it is a singleton),  then  the validity of $\mathcal{C}$ can be seen to mean nothing but its validity in the sense of classical logic. So, in this case,  $\mathcal{C}$ is an axiom of $IF_p$ and hence is provable.

Now, for the rest of this proof, assume that $\mathcal{C}$ is not classical. We construct, bottom-up, a proof of $\mathcal{C}$ as follows.

First, repeat applying Rule I to the current (topmost in the so far constructed proof) cirquent until no longer possible.

Every time Rule I is applied, the current cirquent loses one pair of disjunctive oconnectives $a,b$ such that $a,b$ are in the same cluster and $b$ is a descendant of $a$ (or vice versa). So, sooner or later, we get a cirquent $\mathcal{C}_{1}$ where no descendant-ancestor pair of oconnectives shares the same cluster. Since Rule I preserves truth in the bottom-up direction (Lemma \ref{niu}), $\mathcal{C}_{1}$ is valid.

Let $k_1,k_2,\ldots,k_n$ ($n\geq 0$) be a list of all non-singleton clusters of  $\mathcal{C}_{1}$. Our construction of a proof of $\mathcal{C}$ continues upward from $\mathcal{C}_{1}$ as follows. \vspace{2mm}

Pick any cluster $k_i$ from the above list. Repeat the following three steps until cluster $k_i$ becomes a singleton cluster in the current (topmost) cirquent. Below  we use $\mathcal{D}$ to denote the current cirquent:\vspace{2mm}

{\em Step 1.} Pick any pair  $a,b$ of disjunctive oconnectives such that the following two conditions are satisfied: $a$, $b$ are both in cluster $k_i$; $\mathcal{L}^{\mathcal{D}}(\underline{ab})\geq\mathcal{L}^{\mathcal{D}}(\underline{cd})$ for any pair of oconnectives $c, d$ in cluster $k_i$. Set $m=2$ and $l=\mathcal{L}^{\mathcal{D}}(\underline{ab})$.\footnote{Here $m$ is a variable that records the number of elements of the collection of key oconnectives in the current cirquent. It is introduced into the process mainly for later difinitions and proofs.} \vspace{2mm}

{\em Step 2.} (a) Repeatedly perform the following two actions until $\mathcal{L}^{\mathcal{D}}(a)=l+1$: (i) Apply (bottom-up) Rule II to $\mathcal{D}$, with $a$ being the key oconnective of this application (in the conclusion); (ii) Rename the key oconnective of this application (in the premise) into $a$.

(b)  Repeatedly perform the following two actions until $\mathcal{L}^{\mathcal{D}}(b)=l+1$: (i) Apply (bottom-up) Rule II to $\mathcal{D}$, with $b$ being the key oconnective of this application (in the conclusion); (ii) Rename the key oconnective of this application (in the premise) into $b$.\vspace{2mm}

{\em Step 3.} Apply (bottom-up) Rule III to $\mathcal{D}$, with $a,b$ being the key oconnectives of this application (in the conclusion). Set $m=1$.\vspace{2mm}

During the above three-step procedure, the situation that some descendant together with its ancestor is in the same cluster will not emerge. To see why, note that there are no descendant-ancestor pairs within any given cluster in the current cirquent $\mathcal{D}$ at the beginning of the procedure. If so, step 1 will not give rise to the situation. Next,   applying Rule II to $\mathcal{D}$ in step 2 means that $\mathcal{L}^{\mathcal{D}}(a)$ (resp. $\mathcal{L}^{\mathcal{D}}(b)$) is greater than $\mathcal{L}^{\mathcal{D}}(\underline{ab})+1$; fourthly, during step 1 and step 2, $l$'s being maximal ensures that, when applying Rule II as Figure 1 shows to the current cirquent $\mathcal{D}$, the subcirquent $\mathcal{C}$ does not contain any oconnective $e$ that is also in cluster $k_i$ (otherwise, $\mathcal{L}^{\mathcal{D}}(\underline{ae})>\mathcal{L}^{\mathcal{D}}(\underline{ab})$ (resp. $\mathcal{L}^{\mathcal{D}}(\underline{be})>\mathcal{L}^{\mathcal{D}}(\underline{ab})$), which contradicts the conditions satisfied by $a,b$). Finally, based on the above conditions, applying Rule III in step 3 will not create any descendant-ancestor pairs within any given cluster, either.

Below we verify  that the above three-step procedure terminates in a finite number of steps.

For the current cirquent $\mathcal{D}$ at any given stage of the procedure --- hencefore we shall use $\mathcal D$ as (also) a name of that stage ---  we define the {\bf state} of $\mathcal {D}$ to be the four-tuple $(x,y,z,t)$ as follows, where $m_{\cal D}$, $a_{\cal D}$, $b_{\cal D}$ are the values of the corresponding three variables of the procedure at the beginning of stage $\cal D$:
\begin{itemize}
  \item  $x$ is the number of elements in cluster $k_i$ of $\mathcal{D}$;
  \item $y=x-m_{\mathcal D}$;
  \item $z=\mathcal{L}^{\mathcal{D}}(a_{\cal D})+\mathcal{L}^{\mathcal{D}}(b_{\cal D})$ if $m_{\mathcal{D}}=2$, and $z=\mathcal{L}^{\mathcal{D}}(a_{\cal D})-1$ if $m_{\mathcal{D}}=1$;
  \item $t$ is the total number of elements in all other non-singleton clusters of $\mathcal{D}$ except cluster $k_i$.
  \end{itemize}


Further, we define the relation ``$\leq$" on the set of all such tuples as follows. For any two tuples $(x_1,y_1,z_1,t_1)$ and $(x_2,y_2,z_2,t_2)$, $(x_1,y_1,z_1,t_1)\leq(x_2,y_2,z_2,t_2)$ if and only if one of the following conditions holds: (i) $x_1< x_2$; (ii) $x_1=x_2$ and $y_1< y_2$; (iii) $x_1=x_2$, $y_1=y_2$ and $z_1< z_2$; (iv) $x_1=x_2$, $y_1=y_2$, $z_1=z_2$ and $t_1< t_2$; (v) $x_1=x_2$, $y_1=y_2$, $z_1=z_2$ and $t_1=t_2$. It is easy to see that ``$\leq$" well-orders the set of all states, with each tuple (state) denoting an ordinal $<\omega^{4}$.

Now we show that every step of the process strictly decreases the state of the current cirquent. One can see that the state of the current cirquent depends on cluster $k_i$ and the picked pair of $a,b$. At the beginning of this procedure, for the picked cluster $k_i$ and the picked pair of $a,b$ in step 1, the state $(x,y,z,t)$ has its original value. In step 2, every time substeps (i),(ii) are performed for $a$ (resp. $b$), $x,y$ do not change, but $z$ decreases by 1; then in step 3, when Rule III is applied, $x$ is decreased by 1. As long as cluster $k_i$ is not a singleton, the process will come into the next iteration of the loop. During each iteration, a new pair of $a,b$ is picked and the value of $m$ is changed from $1$ to $2$ in step 1, which makes $x$ unchanged but $y$ is decreased by 1; then every iteration of the inner loop in step 2 leaves $x,y$ unchanged but decreases $z$ by 1; and then step 3 decreases $x$ by 1. Thus, the state keeps decreasing during the procedure, meaning that the latter terminates at some point.

Then after applying the three-step procedure, we get the resulting cirquent $\mathcal{C}_{2}$ where the number of non-singleton clusters is $n-1$, since no new non-singleton clusters are created during the procedure and cluster $k_i$ became a singleton. Pick any cluster $k_j$ ($j\neq i, 1\leq j\leq n$) in $\mathcal{C}_{2}$ and carry out the same steps as we did with $\mathcal{C}_{1}$, then we get the resulting cirquent $\mathcal{C}_{3}$, with $n-2$ non-singleton clusters. Every time the above steps are performed, the number of non-singleton clusters in the current (topmost) cirquent decreases by 1. Therefore, the outermost procedure will end sooner or later with the resulting cirquent $\mathcal{C}_{n+1}$ having no non-singleton clusters. Since only Rule II and Rule III are applied during this procedure when we get $\mathcal{C}_{n+1}$ from $\mathcal{C}_1$ and both of these rules preserve truth in the bottom-up direction, all the cirquents $\mathcal{C}_{2}, \mathcal{C}_{3}, \ldots, \mathcal{C}_{n+1}$ are valid.
Hence, our construction of a proof of $\mathcal{C}$ ends up with the top most cirquent $\mathcal{C}_{n+1}$, which is an axiom of $IF_p$.
\end{proof}






\begin{thebibliography}{99}
\bibitem{deepinference} P. Bruscoli and A. Guglielmi. {\em On the proof complexity of deep inference}. {\bf ACM Transactions on Computational Logic} 10 (2009), Issue 2, Article No.14.

\bibitem{IF} J. Hintikka and G. Sandu. {\em Game-theoretical semantics}. In: {\bf Handbook of Logic and Language}. J. van Benthem and A ter Meulen, eds. North-Holland 1997, pp. 361-410.

\bibitem{slash1}W. Hodges. {\em Compositional semantics for a language of imperfect information}. {\bf Logic Journal of the
IGPL} 5 (1997), pp. 539-563.

\bibitem{slash2}W. Hodges. {\em Logics of imperfect information: why sets of assignments?} In: {\bf Interactive Logic}. J. van
Benthem, D.M. Gabbay and B. L\"{o}ve, eds. Amsterdam University Press 2007, pp. 117-133.

\bibitem{Jap03} G. Japaridze. {\em Introduction to computability logic}. {\bf Annals of Pure and Applied Logic} 123 (2003), pp. 1-99.

\bibitem{Cir} G. Japaridze. {\em Introduction to cirquent calculus and abstract resource semantics}. {\bf Journal of Logic and Computation} 16 (2006), pp. 489-532.

\bibitem{deep} G. Japaridze. {\em Cirquent calculus deepened}. {\bf Journal of Logic and Computation} 18 (2008), pp. 983-1028.

\bibitem{beginning} G. Japaridze. {\em In the beginning was game semantics}. In: {\bf Games: Unifying Logic, Language and Philosophy}. O. Majer, A.-V. Pietarinen and T. Tulenheimo, eds. Springer 2009, pp. 249-350.

\bibitem{fromto} G. Japaridze. {\em From formulas to cirquents in computability logic}. {\bf Logical Methods in Computer Science} 7 (2011), Issue 2, Paper 1, pp. 1-55.

\bibitem{taming1} G.Japaridze. {\em The taming of recurrences in computability logic through cirquent calculus, Part I}. {\bf Archive for Mathematical Logic}  52 (2013),  pp. 173-212.

\bibitem{taming2} G.Japaridze. {\em The taming of recurrences in computability logic through cirquent calculus, Part II}. {\bf Archive for Mathematical Logic}  52 (2013),  pp. 213-259.

\bibitem{IFBOOK} A. L. Mann, G. Sandu and M. Sevenster. {\bf Independence-Friendly Logic: A Game-Theoretic Approach}. Cambridge eBook 2011.

\bibitem{A.P.} A.-V. Pietarinen. {\em Propositional Logic of Imperfect Information: Foundations and Applications}. {\bf Notre Dame Journal Formal Logic} 42 (2001), No. 4, pp. 193-210.

\bibitem{S.andP.}G. Sandu and A. Pietarinen. {\em Partiality and games: propositional logic}. {\bf Logic Journal of the IGPL}
9 (2001), No.1, pp. 107-127.

\bibitem{tul} T. Tulenheimo. {\em Independence friendly logic}. In: {\bf Stanford Encyclopedia of Philosophy}, 2009. http://plato.stanford.edu/entries/logic-if/

\bibitem{Van} J. V\"{a}\"{a}n\"{a}nen. {\bf Dependence Logic: A New Approach to Independence Friendly Logic}. London Mathematical Society Student Texts 2007.

\bibitem{wenyan1} W. Xu and S. Liu. {\em Soundness and completeness of the cirquent calculus system CL6 for computability logic}. {\bf Logic Journal of the IGPL} 20 (2012), pp. 317-330.

\bibitem{wenyan2} W. Xu and S. Liu. {\em The countable versus uncountable branching recurrences in computability logic}. {\bf Journal of Applied Logic} 10 (2012), No.4,  pp. 431-446.

\bibitem{wenyan3} W. Xu and S. Liu.  {\em The parallel versus branching recurrences in computability logic}. {\bf Notre Dame Journal of Formal Logic} 54 (2013), No.1, pp. 61-78.

























\end{thebibliography}
\end{document}